\newcommand{\eqdef}{\stackrel{\rm def}{=}}
\newcommand\np{\mbox{NP}}
\newcommand\sz{\mbox{SIZE}}
\newcommand\tr{\mathsf{True}}
\newcommand\upci{{\mathsf{UB}^{i.o.}_{k}}}
\newcommand\lpv{{L({\mathsf{PV}})}}
\newcommand\un{{1^{(n)}}}
\newcommand\um{{1^{(m)}}}
\newcommand\pnp{\varphi_{\mathsf{P} = \mathsf{NP}}}
\newcommand\LB{\mathsf{LB}_k^{\textit{a.e.}}}
\begin{document}

\title{Consistency of circuit lower bounds with bounded theories}

\author[J.~Bydzovsky]{Jan Byd\v{z}ovsk\'{y}}	%required
\address{Institute of Discrete Mathematics and Geometry, Vienna University of Technology}	%required
\email{jan.bydz@gmail.com}  %optional

\author[J.~Krajicek]{Jan Kraj\'{\i}\v{c}ek}	%optional
\address{Faculty of Mathematics and Physics, Charles University in Prague}	%optional
\email{krajicek@karlin.mff.cuni.cz}  %optional

\author[I.C.~Oliveira]{Igor C.~Oliveira}	%optional
\address{Department of Computer Science, University of Warwick}	%optional
\email{igor.oliveira@warwick.ac.uk}

\begin{abstract}
Proving that there are problems in $\mathsf{P}^\mathsf{NP}$ that require boolean circuits of super-linear size is a major frontier in complexity theory. While such lower bounds are known for larger complexity classes, existing results only show that the corresponding problems are hard on \emph{infinitely many input lengths}. For instance, proving \emph{almost-everywhere} circuit lower bounds is open even for problems in $\mathsf{MAEXP}$. Giving the notorious difficulty of proving lower bounds that hold for all large input lengths, we ask the following question:
\begin{center}
\emph{Can we show that a large set of techniques cannot prove that $\mathsf{NP}$ is easy infinitely often?}
\end{center}
Motivated by this and related questions about the interaction between \emph{mathematical proofs} and \emph{computations}, we investigate circuit complexity from the perspective of logic.

Among other results, we prove that for any parameter $k \geq 1$ it is consistent with
theory $T$ that computational class ${\mathcal C} \not \subseteq \textit{i.o.}\sz(n^k)$,
where $(T, \mathcal{C})$ is one of the pairs:
\begin{center}
$T = \mathsf{T}^1_2$ and ${\mathcal C} = \mathsf{P}^\mathsf{NP}$, $\quad T = \mathsf{S}^1_2$ and ${\mathcal C} = \mathsf{NP}$, $\quad T = \mathsf{PV}$ and ${\mathcal C} = \mathsf{P}$.
\end{center}
In other words, these theories cannot establish infinitely often circuit upper bounds for the corresponding problems. This is of interest because the weaker theory $\mathsf{PV}$ already formalizes sophisticated arguments, such as a proof of the PCP Theorem \citep{Pich-pcp}. These consistency statements are unconditional and improve on earlier theorems of \citep{unprov} and \citep{BydMul} on the consistency of lower bounds with $\mathsf{PV}$.
\end{abstract}
\maketitle

\section{Introduction}\label{s:intro}

Understanding the computational power of polynomial size boolean circuits is one of the most mysterious questions in computer science. Despite major efforts to address this problem and significant progress in several \emph{restricted} settings (e.g.~\citep{DBLP:journals/siamcomp/Mulmuley99, rossmanphdthesis, DBLP:conf/stoc/MurrayW18}), it is consistent with current knowledge that every problem in $\mathsf{NP}$ can be computed by circuits containing no more than $4n$ gates \citep{DBLP:conf/focs/FindGHK16}. This bound is much weaker than the lower bound results conjectured by most (but not all) researchers in the field. For instance, it is reasonable to expect that computing $k$-clique on $n$-vertex graphs requires circuits of size $n^{\Omega(k)}$, but we appear to be very far from establishing a result of this form for \emph{unrestricted} boolean circuits.\\

\subsection*{Fixed-polynomial size circuit lower bounds.} Given the difficulty of proving stronger lower bounds for problems in $\mathsf{NP}$, a natural research direction is to investigate super-linear and fixed-polynomial circuit size lower bounds for problems in larger complexity classes. This line of work was started by Kannan \citep{Kan}, who showed that for each $k \geq 1$ there is a problem in
$\Sigma^p_2 \cap \Pi^p_2$ that cannot be computed by circuits of size $n^k$.
The result was subsequently improved by K\"{o}bler and Watanabe \citep{DBLP:journals/siamcomp/KoblerW98}, who obtained the same lower bound for the class $\mathsf{ZPP}^\mathsf{NP} \subseteq \Sigma^p_2 \cap \Pi^p_2$, and by Cai \citep{DBLP:journals/jcss/Cai07}, who showed it for $\mathsf{S}_2^p \subseteq \mathsf{ZPP}^\mathsf{NP}$. Two incomparable results were then obtained by Vinodchandran \citep{DBLP:journals/tcs/Vinodchandran05} and Santhanam \citep{DBLP:journals/siamcomp/Santhanam09}, who proved that $\mathsf{PP} \nsubseteq \mathsf{SIZE}[n^k]$ and $\mathsf{MA}/1 \nsubseteq \mathsf{SIZE}[n^k]$, respectively.\footnote{We use $\mathsf{SIZE}[s]$ to denote the set of languages computable by circuits of size at most $s(n)$ on every large enough input length. We say that a language $L$ is in $\textit{i.o.}\mathsf{SIZE}[s]$ if there is a language $L' \in \mathsf{SIZE}[s]$ such that $L$ and $L'$ agree on infinitely many input lengths.}

Modulo the use of a single bit of advice on each input length, Santhanam's lower bound is known to imply all aforementioned results. Unfortunately, there exist barriers to adapting his techniques to prove super-linear lower bounds for smaller classes such as $\mathsf{NP}$, as explained by Aaronson and Wigderson \citep{DBLP:journals/toct/AaronsonW09}. Establishing such lower bounds is also open for $\mathsf{P^{\mathsf{NP}}}$, and constitutes an important frontier in the area of fixed-polynomial size lower bounds.\footnote{Indeed, a proof that $\mathsf{E}^\mathsf{NP} \nsubseteq \mathsf{SIZE}[n^{1.01}]$ would be considered a breakthrough by some researchers in the field.} Interestingly, it is known that proving that $\mathsf{P}^\mathsf{NP} \nsubseteq \mathsf{SIZE}[n^k]$ for all $k$ is equivalent to showing a stronger Karp-Lipton collapse under the assumption that $\mathsf{NP} \subseteq \mathsf{SIZE}[\mathsf{poly}]$ \citep{CMMW19}.\footnote{Some of our consistency results can be interpreted from this perspective: it is possible to establish stronger ``logical'' Karp-Lipton collapses if $\mathsf{NP} \subseteq \mathsf{SIZE}[\mathsf{poly}]$ and this inclusion is \emph{provable} in certain theories \citep{CooKra}.} We refer to \citep{DBLP:journals/jco/CaiC06, DBLP:books/sp/goldreich2011/GoldreichZ11} for more information about uniform complexity classes around $\mathsf{P}^\mathsf{NP}$.

While existing circuit lower bounds might not be entirely satisfactory from the perspective of the uniform complexity of the problems, there is another important issue with these results: they only establish  hardness on \emph{infinitely many input lengths}. Could it be the case that some natural problems are easy on some input lengths and hard on others? Perhaps the existence of exceptional mathematical structures\footnote{In the sense of \url{https://en.wikipedia.org/wiki/Exceptional_object}} of certain sizes might affect (non-uniform) complexity theory around some input lengths? This possibility seems unlikely, but we are far from understanding the situation. For instance, a basic question in complexity that remains open is whether the nondeterministic time-hierarchy theorem can be extended to an almost-everywhere result (see \citep{DBLP:conf/icalp/BuhrmanFS09}). On the algorithmic side, an intriguing example is that the natural problem of generating canonical prime numbers admits a faster algorithm on infinitely many input lengths \citep{DBLP:conf/stoc/OliveiraS17}, but showing that the algorithm succeeds on all input lengths is open. More recent works such as  \citep{DBLP:journals/iandc/FortnowS17} and \citep{DBLP:conf/stoc/MurrayW18} show that quite often \emph{some} control can be obtained over the set of hard input lengths. Still, proving an \emph{almost-everywhere} circuit size lower bound beyond $4n$ gates remains open even for problems in $\mathsf{MATIME}[2^{n}]$ (see \citep{DBLP:conf/coco/BuhrmanFT98} for a related lower bound).\\

Addressing these questions without further assumptions (i.e.~unconditionally) appears to be extremely challenging. In this work, we attempt to provide \emph{formal evidence} that some problems in \emph{lower uniform complexity classes} are hard on \emph{every large enough input length}. This can be done via the investigation of circuit complexity from the perspective of mathematical logic. More precisely, we are interested in \emph{unconditional} results showing that lower bounds such as $\mathsf{NP} \nsubseteq \textit{i.o.}\mathsf{SIZE}[n^3]$ are \emph{consistent} with certain logical theories.\footnote{Note also that establishing a consistency statement is a \emph{necessary} step before the corresponding circuit lower bound can be unconditionally established, since a true statement is always consistent with a sound theory.} To obtain interesting results, we consider theories that can formalize a variety of  techniques from algorithms, complexity, and related areas. We  focus on first-order theories in the standard sense of mathematical logic, which offers a principled way of investigating consistency statements of the form above. We describe next the theories relevant to our work.\\

\subsection*{Bounded Arithmetic.} Bounded arithmetic theories are fragments of Peano Arithmetic with close connections to computational complexity and proof complexity. Such theories have been widely investigated by logicians and complexity theorists since the 1970's. Among the most influential theories we have
Cook's equational theory $\mathsf{PV}$ \cite{Coo75} and its corresponding first-order formalization \citep{KPT} (see also \citep{jerabek:sharply-bounded}),\footnote{In this paper we use $\mathsf{PV}$ to refer to its first-order formulation (cf. \citep[Section 5.3]{kniha}).} Buss's theories $\mathsf{S}^1_2$ and $\mathsf{T}^1_2$ \cite{Bus-book}, and extensions of these theories by variants of the pigeonhole principle developed primarily by Je\v r\' abek \cite{Jer04,Jer-phd,Jer07,Jer09} (such as theory $\mathsf{APC}_1$ extending $\mathsf{PV}$). The objects of study in these theories are natural numbers (representing finite binary strings),
and the basic functions and relations are given by polynomial-time ($p$-time) algorithms in some programming scheme. For instance,
Cook \cite{Coo75} relied on Cobham's theorem \cite{Cob64} that all $p$-time functions can be generated from
few initial ones by composition and bounded recursion on notation.
For convenience, the language $\lpv$ we adopt here is the language of $\mathsf{PV}$,
having a function symbol for each $p$-time algorithm.\footnote{\label{f:AKS}This does not necessarily imply that $\mathsf{PV}$ can prove the relevant properties of its function symbols. For instance, the $\mathsf{AKS}$ algorithm \citep{Agrawal02primesis} for testing primality appears as some symbol $f_{\mathsf{AKS}} \in L(\mathsf{PV})$, but $\mathsf{PV}$ might not be able to prove that $x$ is prime if and only if $f_\mathsf{AKS}(x) = 1$.} There are relation
symbols $=$ and $\le$ with their usual meaning, and all other relations we want to include are
represented by their characteristic functions. The specific axiomatization of $\mathsf{PV}$ is not important here: everything
will also work for the theory of all true universal $\lpv$-sentences (to be denoted by $\tr_0$),
and $\mathsf{PV} \subseteq \tr_0$. We only note that $\mathsf{PV}$ proves induction for all $p$-time predicates
by formalizing binary search, cf.~\cite{KPT,kniha}.

The original language of theories $S^1_2$ and $T^1_2$ as defined in \cite{Bus-book} is a finite subset of $\lpv$, but we consider theories $S^1_2(\mathsf{PV})$ and $T^1_2(\mathsf{PV})$ in the richer language $\lpv$. (We will add to these theories even more axioms, which makes any consistency statement stronger.) The principal axioms of the two theories are length-induction (LIND) and
induction (IND), respectively, accepted for $\Sigma^b_1(\mathsf{PV})$-formulas.\footnote{We review later in the text some definitions necessary in this work. For more information about standard concepts in bounded arithmetic, we refer to a reference such as \citep{kniha}.}
Theory $S^1_2(\mathsf{PV})$ is close to $\mathsf{PV}$ (it is $\forall \Sigma^b_1(\mathsf{PV})$-conservative over it),
% and hence (\ref{31.1.19a})
%is consistent with PV iff it is consistent with $S^1_2(PV)$)
but $T^1_2(\mathsf{PV})$ appears to be significantly stronger (cf.~\cite{kniha}).
Theories  $\mathsf{PV}$, $\mathsf{S}^1_2(\mathsf{PV})$ and $\mathsf{T}^1_2(\mathsf{PV})$ and their
extensions by a form of the pigeonhole principle (often referred to as $\mathsf{dWPHP}$ or $\mathsf{sWPHP}$)
are actually quite strong for the purposes of complexity theory. They are now known to formalize many
key theorems in algorithms, combinatorics, complexity, and related fields (cf.~\cite{PW,Bus-book,kniha, razborov1995bounded, Jer-phd,Jer04,Jer07,Jer09,CooNgu,Pich-phd,Pich,Pich-pcp, buss2015collapsing, DBLP:journals/corr/abs-1103-5215, ojakian2004combinatorics, DTML_thesis, DBLP:journals/eccc/MullerP17}
and references therein).

Recall that the class of $\Sigma^b_1(\mathsf{PV})$-formulas consists of formulas of the form
\[
\exists y_1 \le t_1(\overline x)\dots \exists y_k \le t_k(\overline x)\,
A(\overline x, \overline y)\;,
\]
where the $t_i$ are $\lpv$-terms not involving $y_i$, and $A$ is quantifier free. The definition of this class
in the original language of $S^1_2$ is a bit more complicated (distinguishing two kinds of bounded quantifiers),
but in our language $\lpv$ it is equivalent to this simpler definition.
The class of $\Sigma^b_2(\mathsf{PV})$-formulas is defined similarly, but the formula $A$ can also be the negation of a
$\Sigma^b_1(\mathsf{PV})$-formula (these negations are $\Pi^b_1(\mathsf{PV})$-formulas).
The predicates definable over the natural numbers by $\Sigma^b_1(\mathsf{PV})$-formulas and by $\Sigma^b_2(\mathsf{PV})$-formulas
are exactly the predicates from $\Sigma^p_1 = \np$ and from $\Sigma^p_2$, respectively. We shall denote the theory of all true $\forall \Sigma^b_1(\mathsf{PV})$-sentences by $\tr_1$.\\

\subsection*{Our results.} For an $\lpv$-formula $\varphi(x)$ and an integer $k \geq 1$, the $\lpv$-sentence $\upci(\varphi)$ is defined as follows:
\begin{equation}\label{eq:defformula}
\forall 1^{(n)} \,\exists \um (m \geq n)\,\exists C_m (|C_m| \le m^k) \,\forall x (|x|=m),\
\varphi(x) \equiv (C_m(x)=1)\ .
\end{equation}
The sentence $\upci(\varphi)$ formalizes that the $m$-bit boolean functions defined by $\varphi$ (over different input lengths) are computed infinitely often (\emph{i.o.}) by circuits of size $m^k$.\footnote{The notation $1^{(n)}$ means that $n$ is the length of another variable. We abuse notation and use $|C_m|$ to denote the number of gates in $C_m$. We refer to \citep{Pich} for a detailed discussion of the formalization of circuit complexity in bounded arithmetic.}

We unconditionally establish that almost-everywhere circuit lower bounds for complexity classes contained in $\mathsf{P}^{\mathsf{NP}}$ are consistent with bounded arithmetic theories.

\begin{thm}[Consistency of almost-everywhere circuit lower bounds with bounded \nolinebreak theo\nolinebreak ries] \label{t:main} Let $k \geq 1$ be any positive integer. For any of the following pairs of an $\lpv$-theory $T$ and a uniform complexity class ${\mathcal C}$:
\begin{enumerate}[label=(\alph*)]

\item $T = \mathsf{T}^1_2(\mathsf{PV}) \cup \tr_1$ and ${\mathcal C} = \mathsf{P}^\mathsf{NP}$,

\item $T = \mathsf{S}^1_2(\mathsf{PV}) \cup \tr_0$ and ${\mathcal C} = \mathsf{NP}$,

\item $T = \mathsf{PV} \cup \tr_0$ and ${\mathcal C} = \mathsf{P}$,

\end{enumerate}
there is an $\lpv$-formula $\varphi(x)$ defining a language $L \in \mathcal C$
such that
$T$ does not prove the sentence $\upci(\varphi)$.
\end{thm}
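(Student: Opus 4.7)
My proof plan follows the ``witnessing plus diagonalization'' paradigm used in \citep{unprov} and \citep{BydMul} for closely related statements. For each case $(T, \mathcal{C})$, the goal is to exhibit a single formula $\varphi \in \lpv$ defining a language $L \in \mathcal{C}$, and then derive a contradiction from the hypothesis $T \vdash \upci(\varphi)$. The formula $\varphi$ is built by a diagonal construction: using a standard encoding, the length-$m$ slice of $L$ is chosen to differ from the output of any ``cheap'' procedure in an appropriate complexity class trying to construct size-$m^k$ circuits. Crucially, this diagonal construction must itself be carried out inside $\mathcal{C}$, so the amount of resources used to eliminate circuits has to match $\mathcal{C}$: for~(c) we eliminate circuits producible by $\cP$-time algorithms; for~(a) by $\cPNP$ algorithms; and for~(b) we use an $\cNP$-check to eliminate potential circuits.

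Assuming $T \vdash \upci(\varphi)$, the second ingredient is the appropriate witnessing theorem for $T$. Since $\upci(\varphi)$ has the $\forall\Sigma^b_2$-shape $\forall 1^{(n)} \exists (m,C)\,\forall x\, A$ with $A$ definable at the corresponding level (quantifier-free for~(c), $\Sigma^b_1(\mathsf{PV})$ for~(b), and $\Delta^b_2(\mathsf{PV})$ for~(a)), I apply KPT or Buss-style witnessing. For~(c), $\mathsf{PV} \cup \tr_0$ is a universal $\lpv$-theory, so KPT yields constantly many $\mathsf{PV}$ function symbols $f_1,\dots,f_r$ playing an $r$-round interactive game against counterexamples. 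For~(b), the same KPT-style witnessing applies to $\mathsf{S}^1_2(\mathsf{PV}) \cup \tr_0$ via the $\forall\Sigma^b_1(\mathsf{PV})$-conservativity of $\mathsf{S}^1_2(\mathsf{PV})$ over $\mathsf{PV}$, together with a Herbrand-style reduction of the outer $\forall x$-quantifier. For~(a), Buss's witnessing for $\mathsf{T}^1_2(\mathsf{PV})$ supplies $\cPNP$-computable witnesses, or equivalently a KPT-style interaction carried out by $\cPNP$-functions; the role of $\tr_1$ is to guarantee that the witnessing functions behave correctly on the actual standard input lengths.

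The third step combines witnessing with diagonalization. Because $L \in \mathcal{C}$, one can use $L$ itself as the ``counterexample oracle'' driving the KPT game against $\varphi$: whenever a witness function $f_i$ outputs an alleged circuit $C_i$ for inputs of length $m_i$, we compute in $\mathcal{C}$ an input $x$ of length $m_i$ on which $C_i$ and $L$ disagree. This is where the diagonal design of $\varphi$ is used: $L$ is built so that any circuit produced by a procedure in the witnessing class for $T$ can be refuted by computing $L$ on a carefully chosen input, and this refutation is within the power of $\mathcal{C}$. After at most $r$ rounds one thereby exhibits counterexamples to every $f_i$, contradicting the KPT guarantee that the protocol eventually terminates with a correct circuit.

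The main technical obstacle is self-reference: the witnessing tuple $(f_1,\dots,f_r)$ depends on the $T$-proof of $\upci(\varphi)$, hence implicitly on $\varphi$ itself, so $\varphi$ must diagonalize against all potential such tuples simultaneously. I expect to handle this by a universality argument: build $\varphi$ so that $L$, at input length $m$, diagonalizes against every procedure in the witnessing class whose encoding length is below some slowly growing parameter $s(m)$; then for any particular $T$-proof of $\upci(\varphi)$ of length $\ell$, the associated KPT tuple is fixed and has bounded description, so at all sufficiently large lengths $m$ it is already among the enumerated procedures and has been refuted. Keeping $L$ inside the intended class $\mathcal{C}$ while implementing this enumeration, and correctly organizing the multi-round KPT interaction in each of the three cases, is where most of the technical work lies.
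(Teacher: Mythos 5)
Your proposal follows a unified ``diagonalize against the witnessing class'' template for all three parts, but this diverges from what the paper actually does, and the divergence introduces gaps that your sketch does not close.

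For parts (a) and (b), the paper does not diagonalize at all. Instead it does a case split on whether $\cPH$ collapses to $\cPNP$: if yes, Kannan's argument gives an explicit almost-everywhere hard language in $\cPNP$ (or, via Lemma~\ref{l:PNP_to_NP}, in $\cNP$) and soundness of $T$ finishes the job; if no, one assumes $T \vdash \upci(\varphi_\mathsf{SAT})$, uses Parikh's theorem together with paddability of SAT to turn the infinitely-often bound into a provable everywhere-polynomial bound, and then invokes the Cook--Kraj\'i\v cek logical Karp--Lipton theorem to derive $\cPH \subseteq \cPNP$, a contradiction. Your plan to build $L \in \cNP$ directly by diagonalizing against small circuits cannot work unconditionally: that would amount to proving $\cNP \nsubseteq \textit{i.o.}\SIZE[n^k]$, which is exactly the open lower bound the theorem is designed to be \emph{consistent with}, not to prove. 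The paper avoids this by letting the collapse hypothesis supply the diagonalization power, and by routing through Lemma~\ref{l:PNP_to_NP} to push Kannan's $\cPNP[n]$ construction down into $\cNP$. Your proposal also omits the Parikh/padding step, which is what makes the hypothesis $T \vdash \upci(\varphi_\mathsf{SAT})$ usable at all (it converts the existential $m$ into a polynomial bound so that a circuit at length $m$ serves all smaller lengths).

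For part (c), your outline is closer in spirit but still has two concrete gaps. First, your step ``whenever $f_i$ outputs an alleged circuit $C_i$ for length $m_i$, we compute in $\mathcal{C}=\cP$ an input $x$ on which $C_i$ and $L$ disagree'' is exactly the hard part and is asserted rather than argued: a polynomial-time machine cannot in general locate a disagreement between an arbitrary circuit and an arbitrary $\cP$-language, and this is where both \citep{unprov} and the present paper do real work. The paper does not diagonalize against the KPT tuple directly; it uses the fixed function $g_{k'}$ of \citep{unprov}, whose hardness against uniform algorithms with advice is \emph{provable in $\PV$}, and then exploits that provability together with Herbrand's theorem to obtain $\PV$ function symbols $u_i, e_i$ that \emph{provably} produce counterexamples. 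Your proposal replaces this with an informal ``$L$ serves as its own counterexample oracle,'' which does not supply the needed $\PV$-provable witnessing functions. Second, the infinitely-often quantifier structure forces the iteration over a shrinking chain of infinite input-length sets $S_r \subseteq \cdots \subseteq S_1$, implemented with the extra padding parameter $t$ in the direct-connection language; this is the genuinely new technical difficulty relative to \citep{unprov}, and your sketch does not engage with it. Finally, your ``universality'' fix for self-reference is heuristic: the KPT round count $r$ and the running times of $f_1,\dots,f_r$ depend on the proof, and one cannot bound them in advance when defining $\varphi$; the paper sidesteps the issue entirely because $g_{k'}$ is fixed independently of the hypothetical proof, and the iteration operates on the KPT disjunction rather than on an enumeration hard-wired into~$\varphi$.
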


Our arguments are somewhat non-constructive and do not provide a single explicit formula $\varphi(x)$ in each case of the result. Informally, Theorem \ref{t:main} shows (in particular) the following consistency statements:
\begin{align*}
  T^1_2(\mathsf{PV}) \,&\nvdash\, \mathsf{P}^\mathsf{NP}\subseteq \text{\emph{i.o.}}\mathsf{SIZE}[n^k]\\
  S^1_2(\mathsf{PV}) \,&\nvdash\, \mathsf{NP} \subseteq \text{\emph{i.o.}}\mathsf{SIZE}[n^k]\\
  \mathsf{PV} \,&\nvdash\, \mathsf{P} \subseteq \text{\emph{i.o.}}\mathsf{SIZE}[n^k]
\end{align*}

In other words, there are models of these theories (satisfying a large fraction of modern complexity theory) that contain explicit problems that require circuits of size $n^k$ on every large enough input length.\footnote{A bit more precisely, the lower bound holds for every input length $n \geq n_0$, where $n_0$ is an element of the model. Note that $n_0$ might be a nonstandard element of this model.} Another interpretation is that one can develop theories of computational complexity that postulate the existence of hard problems (as new axioms) without ever proving a contradictory statement.\footnote{One can even contemplate the possibility that more advanced consistency results might allow the development of ``logic-based'' cryptography: protocols that are unconditionally secure against all efficient algorithms that can be proved correct in a given theory.} As alluded to above, given the expressive power of these theories, we view the consistency results as evidence that such lower bounds hold in the standard mathematical universe. Nevertheless, if one strongly believes in an inclusion such as $\mathsf{NP} \subseteq \mathsf{SIZE}[n^k]$ for a large enough $k$, then Theorem \ref{t:main} shows that even to prove this inclusion on infinitely many input lengths it will be necessary to use mathematical arguments that are beyond the reasoning capabilities of the corresponding theories.

We stress that $\mathsf{True}_0$ and $\mathsf{True}_1$ contain several statements of interest about algorithms, boolean circuits, extremal combinatorial objects, etc.~Theorem \ref{t:main} shows that even assuming such statements as  axioms the corresponding theories cannot prove fixed-polynomial size circuit upper bounds.\footnote{For instance, $\mathsf{T}^1_2(\mathsf{PV}) \cup \mathsf{True}_1$ proves the correctness of the AKS primality testing algorithm (see Footnote \ref{f:AKS}), i.e., it shows that $\forall x \, (\exists y\,(1 < y < x \wedge y \,|\, x) \leftrightarrow f_\mathsf{AKS}(x) = 0)$ since this sentence is in $\mathsf{True}_1$. This implies that this theory proves that primality testing can be done by circuits of size $n^c$ for a fixed $c$ on  every large enough input length $n$.}

We note that the particular syntactic form of $\varphi$ defining the hard language in Theorem \ref{t:main} items (a) and (c) is irrelevant as long as $p$-time functions and predicates are defined
by open formulas of the language of $\mathsf{PV}$ and the SAT predicate used in the argument is defined
by a $\Sigma^b_1$-formula. Indeed, if two open $L(\mathsf{PV})$-formulas
 define the same predicate then this universal statement
is included in theory $\mathsf{True}_0$ and hence (c) holds identically for
all open formulas defining the same language.
An analogous observation applies to (a): languages in $\mathsf{P}^{\mathsf{NP}}$ are definable by $\Delta^b_2$-formulas w.r.t.~the theory\footnote{In other words, the languages have both $\Sigma^b_2$ and $\Pi^b_2$ definitions that are
provably equivalent in the theory.}  and
the universal statement stating their equivalence is thus in $\mathsf{True}_1$.\\

\subsection*{Related work and techniques.} Some works have investigated the \emph{unprovability} of circuit lower bounds, or equivalently, the \emph{consistency of upper bounds}. We refer to the introduction of \citep{DBLP:journals/eccc/MullerP17} for more information about this line of work, and to Appendix \ref{s:appendix_unprov} for some related remarks that might be of independent interest. Theorem \ref{t:main} and our techniques are more directly connected to \citep{CooKra}, \citep{unprov}, and \citep{BydMul}. We review the relevant results next.

Cook and Kraj\'{\i}\v{c}ek \citep{CooKra} (see also \citep{krajicek_extensions}) were the first to systematically investigate the consistency of circuit lower bounds. They established several results showing that $\mathsf{NP} \nsubseteq \mathsf{SIZE}[\mathsf{poly}]$ is consistent with $\mathsf{PV}$, $\mathsf{S}^1_2$, and $\mathsf{T}^1_2$ under appropriate assumptions regarding the collapse of $\mathsf{PH}$. For instance, it was shown (in particular) that $\mathsf{T}^1_2 \nvdash \mathsf{NP} \subseteq \mathsf{SIZE}[\mathsf{poly}]$ if $\mathsf{PH} \nsubseteq \mathsf{P}^\mathsf{NP}$.  While their results are \emph{conditional}, \citep{CooKra} considered consistency statements for a fixed language in $\mathsf{NP}$ with respect to all polynomial bounds. In \citep{unprov}, two of the authors established an \emph{unconditional} result showing that $\mathsf{PV} \nvdash \mathsf{P} \subseteq \mathsf{SIZE}[n^k]$, where $k$ is any fixed integer. This consistency statement was subsequently improved by \citep{BydMul}, who considered a more natural formalization of the statement that a language has circuits of size $O(n^k)$ and adapted the argument \citep{unprov} using polynomial-time ultrapowers.\footnote{The formalizations in \citep{unprov}  and \citep{BydMul} differ on how the $O(\cdot)$ notation is handled, and we refer to the corresponding papers for details. Here the sentences $\upci(\varphi)$ refer to infinitely often upper bounds, and this issue is not relevant.} All previous results refer to the consistency of lower bounds on infinitely many input lengths, and Theorem \ref{t:main} part (c) strictly improves upon \citep{unprov} and \citep{BydMul}.\footnote{In model-theoretic terms, \citep{unprov} and \citep{BydMul} provide models where the circuit lower bound holds on some large enough input length. A
slight modification of the proof in \citep{BydMul} gives a fixed model with arbitrarily large hard
input lengths (Moritz M\"{u}ller, private communication). On the other hand, our results provide a model where the lower bound holds on every large enough input length.}

In terms of techniques, the proof of Theorem \ref{t:main} explores methods from complexity theory and mathematical logic to establish the  unprovability of infinitely often upper bounds. We combine ideas from the conditional results of \citep{CooKra} with the unconditional approach of \citep{unprov}. The general theme is to obtain computational information from proofs in the corresponding bounded theories. For instance, under the assumption that there is a $\mathsf{PV}$-proof $\pi$ that a problem in $\mathsf{P}$ admits \emph{non-uniform} circuits of size $n^k$, we attempt to extract from $\pi$ a more ``uniform'' construction of such circuits. The ideal plan is to contradict existing lower bounds against uniform circuits, such as those investigated in \citep{SW} and other works. However, as explained in \citep{unprov}, implementing this plan is not straightforward, since the ``uniformity'' one obtains from $\pi$ does not match existing results in the area of uniform circuit lower bounds. Moreover, the proof of Theorem \ref{t:main} creates additional difficulties because the uniform circuit lower bounds, already insufficient, only hold on infinitely many input lengths. In order to overcome this difficulty, we make use of further insights on the logical side of the argument. In turn, this requires appropriate extensions of the complexity-theoretic arguments.\\

\subsection*{Extensions and open problems.} One can adapt the methods used in the proof of Theorem \ref{t:main} to show that $\mathsf{APC}^1$ and indeed theory
\begin{equation}\label{eq:unprov_APC}
\mathsf{S}^1_2(\mathsf{PV}) \cup \mathsf{sWPHP}(\mathsf{PV}) \;\nvdash\; \upci(\varphi)\; ,
\end{equation} for some $L(\mathsf{PV})$-formula $\varphi(x)$ defining a language in $\mathsf{ZPP}^{\mathsf{NP}[O(\log n)]}$.\footnote{This is obtained as in Theorem \ref{t:main} parts (a) and (b) by proving the following ``logical'' Karp-Lipton collapse: If $\mathsf{S}^1_2(\mathsf{PV}) \cup \mathsf{sWPHP}(\mathsf{PV}) \vdash \mathsf{NP} \subseteq \mathsf{SIZE}[\mathsf{poly}]$ then $\mathsf{PH}$ collapses to $\mathsf{ZPP}^{\mathsf{NP}[O(\log n)]}$. The proof of the latter adapts the argument in \citep[Theorem 5.1 (\emph{ii})]{CooKra}, using randomization to obtain witnesses for the required $\mathsf{dWPHP}$ axioms and an $\mathsf{NP}$ oracle to check that they are correct. (A bit more formally, the idea is to first Skolemize the theory, reducing the argument to the case of $\mathsf{S}^1_2(\mathsf{PV})$, then to handle the newly introduced function symbols by witnessing them in the standard model through a probabilistic computation with an $\mathsf{NP}$ oracle.)} In contrast, existing (infinitely often) lower bounds for $\mathsf{ZPP}^\mathsf{NP}$ seem to hold only when the $\mathsf{NP}$ oracle is adaptively queried polynomially many times \citep{DBLP:journals/siamcomp/KoblerW98, DBLP:journals/jcss/Cai07}, or with respect to non-adaptive queries but for a promise version of this class (see the discussion in  \citep[Section 3.2]{DBLP:journals/siamcomp/Santhanam09}). There is strong evidence that asking more queries increases computational power (see \citep{DBLP:journals/jco/CaiC06, DBLP:conf/coco/ChangP08} and references therein), and it is known that polynomially many non-adaptive queries to an $\mathsf{NP}$ oracle are equivalent in power to logarithmic many adaptive queries \citep{DBLP:journals/jcss/Hemachandra89, DBLP:journals/iandc/BussH91}. The problem of proving super-linear circuit lower bounds for $\mathsf{ZPP}^\mathsf{NP}_\mathsf{tt}$ (i.e.~$\mathsf{ZPP}^\mathsf{NP}$ with non-adaptive queries) was investigated recently by \citep{DBLP:conf/mfcs/DixonPV18}, and in a sense the consistency statement in (\ref{eq:unprov_APC}) addresses this question with respect to $\mathsf{APC}^1$.

On the one hand, this consistency statement feels less appealing than the results in Theorem \ref{t:main} due to its proximity to existing lower bounds in complexity theory. But on the other hand, it highlights the importance of $\mathsf{APC}^1$ in connection to frontier questions in complexity theory and lower bounds. As one of our main open problems, we ask for the proof of stronger consistency results for the theory $\mathsf{APC}^1$. For instance, can one show that $\mathsf{APC}^1 \nvdash \mathsf{MA} \subseteq \mathsf{SIZE}[n^k]$, partially addressing the use of non-uniform advice in \citep{DBLP:journals/siamcomp/Santhanam09}? In connection to this and related problems, it might be fruitful to investigate a potential extension of the equivalence in \citep{CMMW19} to a result that relates consistency statements, witnessing theorems, and logical Karp-Lipton theorems.

It would also be interesting to improve our consistency results for $\mathsf{S}^1_2$ and $\mathsf{T}^1_2$ with respect to the uniformity of the hard problems, and to establish a non-trivial statement about the consistency of circuit lower bounds with $\mathsf{T}^2_2$ (Theorem \ref{t:main} part (\emph{a}) extends to $\mathsf{S}^2_2$ using a similar argument and appropriate results from \citep{CooKra}).\\

We include in Appendix \ref{s:appendix_unprov} a discussion on the consistency of $\mathsf{P} \neq \mathsf{NP}$ and its connection to the unprovability of circuit lower bounds.

\section{Background and notation}

In order to emphasize the main ideas, we assume some familiarity with logic, bounded arithmetic, and complexity theory. Everything needed can be found in \cite{kniha}. The interested reader can consult \citep{CooNgu} for a more recent reference in bounded arithmetic, \citep{buss1997bounded} for a concise introduction, and \citep{DBLP:books/daglib/0031527} for an accessible exposition. For more background in circuit complexity, we refer to \citep{DBLP:books/daglib/0028687}. For a discussion of the formalization of complexity theory and circuit complexity in bounded arithmetic, see \citep{DBLP:journals/eccc/MullerP17} and references therein.

Our proofs will rely on some results and arguments from \citep{CooKra} and \citep{unprov}, and we refer to the detailed presentation in these papers instead of repeating the proofs here. In more detail, what is needed  from \citep{CooKra} is that some or their theorems can be modified to include  $\mathsf{True}_0$ or $\mathsf{True}_1$. On the other hand, the proof of Theorem \ref{t:main} (c) can only be followed if the reader is familiar with the simpler argument from \citep{unprov}.

We use $\mathsf{P}^{\mathsf{NP}[\ell(n)]}$ to denote the set of languages decided by a deterministic polynomial time machine that makes at most $\ell(n)$ queries to an $\mathsf{NP}$ oracle. We will assume without loss of generality that the oracle is some fixed $\mathsf{NP}$-complete language such as formula satisfiability.

\section{Consistency of lower bounds with bounded arithmetic}

This section proves Theorem \ref{t:main}. Let $k$ be a positive integer. We argue in each item as follows.\\

\noindent (\emph{a}) We consider two cases. If the polynomial hierarchy $\mathsf{PH}$ collapses to $\mathsf{P}^\mathsf{NP}$, then we can define a language $L \in \mathsf{P}^\mathsf{NP}$ such that $L \notin \textit{i.o.}\mathsf{SIZE}[n^k]$. More precisely, $L$ computes on input length $n$ as the lexicographic first truth-table corresponding to a function $h \colon \{0,1\}^n \to \{0,1\}$ that cannot be computed by circuits of size $n^k$. This language can be easily specified using a constant number of quantifiers over strings of length $\mathsf{poly}(n)$ (cf.~\citep{Kan}). By the equivalence between languages in $\Sigma^p_i$ and predicates definable by $\Sigma^b_i(\mathsf{PV})$ formulas (see e.g.~\citep[Theorem 3.2.12]{kniha}), there is an $L(\mathsf{PV})$-formula $\varphi_L(x)$ that defines $L$ (using the correspondence between $\{0,1\}^*$ and $\mathbb{N}$). Since $\mathsf{T}^1_2(\mathsf{PV}) \cup \mathsf{True}_1$ is sound and $L$ is hard on every large enough input length, this theory cannot prove the sentence $\upci(\varphi_L)$.

Assume now that $\mathsf{PH}$ does not collapse to  $\mathsf{P}^\mathsf{NP}$. Let $\varphi_\mathsf{SAT}(x)$ be a $\Sigma^b_1(\mathsf{PV})$-formula that defines the formula satisfiability problem (SAT). We take a particular formulation of $\varphi_\mathsf{SAT}(x)$ for which the input encoding is paddable, meaning that inputs of the satisfiability problem of length $\ell < m$ can be easily converted into equivalent inputs of length $m$. If $\mathsf{T}^1_2(\mathsf{PV}) \cup \mathsf{True}_1$ does not prove $\upci(\varphi_\mathsf{SAT})$, we are done, given that this formula defines a language in $\mathsf{NP} \subseteq \mathsf{P}^\mathsf{NP}$. Suppose $\mathsf{T}^1_2(\mathsf{PV}) \cup \mathsf{True}_1 \vdash \upci(\varphi_\mathsf{SAT})$. This formula has unbounded existential quantifiers, but since $\mathsf{T}^1_2(\mathsf{PV}) \cup \mathsf{True}_1$ is axiomatized by bounded formulas, Parikh's theorem (cf.~\citep[Section 5.1]{kniha}) implies that there is an $\lpv$-term $t(x)$ such that $\upci(\varphi_\mathsf{SAT})$ is provable in the theory even if
the existential quantifiers are bounded by $t(\un)$. In particular, $m$ and $|C_m|$ in Equation (\ref{eq:defformula}) can be bounded as
\begin{equation} \label{31.3.19b}
m, \ |C_m|\ \le\ n^{O(1)}\ .
\end{equation}
By our assumption on paddability, a circuit $C_m$ deciding satisfiability on formulas encoded using $m$ bits also works for all formulas of length $n \le m$.
But by (\ref{31.3.19b}), $|C_m| \le n^{O(1)}$ and hence $C_m$ can serve as a polynomial size circuit solving SAT on
formulas of size $n$. Consequently, if $\mathsf{T}^1_2(\mathsf{PV}) \cup \mathsf{True}_1$ proves that SAT is infinitely often in $\mathsf{SIZE}[n^k]$ it also proves that
SAT is in $\mathsf{SIZE}[\mathsf{poly}(n)]$. We now invoke the argument of \citep[Theorem 5.1 (\emph{iii})]{CooKra} who showed (in particular) that if $\mathsf{T}^1_2$ proves that SAT~$\in \mathsf{SIZE}[\mathsf{poly}]$ then $\mathsf{PH}$ collapses to $\mathsf{P}^\mathsf{NP}$. Their proof can be adapted to $\mathsf{T}^1_2(\mathsf{PV}) \cup \mathsf{True}_1$, since all sentences in $\tr_1$ are witnessed by $\mathsf{FP}^\mathsf{NP}$ functions and adding these sentences as new axioms does not affect the required witnessing theorem.\footnote{In more detail, adding a function symbol for these witnessing
functions turns sentences from $\mathsf{True}_1$ into universal sentences,
and universal sentences do not influence witnessing theorems. For example, if $\forall x \exists y\;(y \leq s(x) \wedge A(x,y))$ is in $\mathsf{True}_1$ and $f$ is the symbol for the associated witnessing function,
the universal sentence will be $\forall x\;(f(x) \le s(x) \wedge A(x,f(x)))$.} This collapse of $\mathsf{PH}$ is in contradiction to our assumption in this case of the proof, which completes the argument.\\

\noindent (\emph{b}) Consider the formula $\varphi_\mathsf{SAT}(x)$ defined in item (\emph{a}) above. If $\mathsf{S}^1_2 \cup \mathsf{True}_0 \nvdash \upci(\varphi_\mathsf{SAT})$ there is nothing else to prove. Otherwise, by the same argument via Parikh's Theorem it follows that $\mathsf{S}^1_2 \cup \mathsf{True}_0$ proves that SAT admits polynomial size circuits on every input length. Now the argument in \citep[Theorem 5.1 (\emph{ii})]{CooKra} (easily modifiable to handle $\mathsf{True}_0$ because axioms in it are universal sentences) implies that every language $L \in \mathsf{PH}$ is also in $\mathsf{P}^{\mathsf{NP}[c \cdot \log n]}$ for some $c \in \mathbb{N}$. In particular, every such language is in $\mathsf{P}^{\mathsf{NP}[n]}$. Consequently, by Kannan's construction \citep{Kan} there is a language $L_\mathsf{hard} \in \mathsf{P}^{\mathsf{NP}[n]}$ such that $L_\mathsf{hard} \notin \textit{i.o.}\mathsf{SIZE}[n^{k + 2}]$. We will need the following lemma.\footnote{The proof of the lemma uses ideas from the proof of \cite[Proposition 1.3]{fragm} showing that $\mathsf{S}^1_2$ can define all $\mathsf{FP}^{\mathsf{NP}[wit,O(\log n)]}$ functions, extending a proof from \cite{Bus-book} that $\mathsf{T}^1_2$ can define all $\mathsf{FP}^\mathsf{NP}$ functions. A similar argument was employed in the proof of \cite[Theorem 10]{FSW09} (without the infinitely often condition).}

\begin{lem}\label{l:PNP_to_NP}
If $\mathsf{NP} \subseteq \textit{i.o.}\mathsf{SIZE}[n^k]$ then $\mathsf{P}^{\mathsf{NP}[n]} \subseteq \textit{i.o.}\mathsf{SIZE}[n^{k +2}]$.
\end{lem}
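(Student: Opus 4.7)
The plan is to reduce deciding any $L \in \mathsf{P}^{\mathsf{NP}[n]}$ to $n+1$ adaptive queries on a single NP language $\widetilde L$ whose instances have length $O(n)$, and then to apply the hypothesis to $\widetilde L$, answering each query with a circuit of size $O(n^k)$ on an infinite family of input lengths. The adaptive queries will be chained through a lex-max paradigm in the style of the witnessing theorems mentioned in the footnote: the true answer string of $M$'s oracle calls will be recovered bit by bit by a binary search against a single NP predicate, and then one extra query to the same predicate will return the value of $L$.

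Fix a deterministic polynomial-time machine $M$ deciding $L$ that makes at most $n$ adaptive SAT queries $q_1,\dots,q_n$, where $q_i$ is a polynomial-time function of $x$ and the preceding oracle answers. Call $a \in \{0,1\}^n$ \emph{partially consistent} for $x$ if every $a_i = 1$ admits a satisfying assignment to $q_i(x,a_1,\dots,a_{i-1})$. A standard observation is that the true answer sequence $a^*(x) := (\mathsf{SAT}(q_1),\dots,\mathsf{SAT}(q_n))$ is the lexicographically largest partially consistent string: any lex-greater partially consistent $a$ would, at the first coordinate where $a$ and $a^*$ disagree, claim satisfiability of a query that is in fact unsatisfiable. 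I then introduce the NP language $\widetilde L$ of tuples $(x,b,f,t)$, with $b \in \{0,1\}^{\le n}$ and $f,t \in \{0,1\}$, asking whether there exists a partially consistent $a$ extending the prefix $b$ and, in the mode $f=1$, additionally satisfying $M(x,a) = t$. Instances of $\widetilde L$ have length $O(n)$, and its NP verifier simulates $M$ internally in polynomial time, so the runtime of $M$ does not enter the final circuit-size bound; only the instance length of $\widetilde L$ does.

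Using $\widetilde L$, I would build the circuit for $L$ on length $n$ as a lex-max binary search: for $i = 1,\dots,n$ query $\widetilde L(x,\,b\,1,\,0,\,0)$ with the current prefix $b$, updating $b \mapsto b\,1$ on YES and $b \mapsto b\,0$ on NO; after $n$ queries $b = a^*(x)$, and one final query $\widetilde L(x,\,a^*(x),\,1,\,1)$ returns $L(x)$, since $a^*(x)$ is the unique partially consistent extension of itself. By $\mathsf{NP} \subseteq \textit{i.o.}\mathsf{SIZE}[n^k]$ applied to $\widetilde L$, there is an infinite set $S$ of lengths $\ell$ on which $\widetilde L$ has circuits of size $\ell^k$. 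For each sufficiently large $\ell \in S$, set $n := \lfloor \ell / c_0 \rfloor$ for the encoding constant $c_0$, pad every query to length $\ell$ (using a trivial padding field in $\widetilde L$), and use one copy of the length-$\ell$ circuit per query; the adaptive control wiring between queries costs $O(n)$ gates per step. The total circuit size on length $n$ is
\[
(n+1)\,\ell^k + O(n^2) \;=\; O(n^{k+1}) \;\le\; n^{k+2},
\]
and these input lengths $n$ form an infinite set as $\ell$ ranges over $S$, yielding $L \in \textit{i.o.}\mathsf{SIZE}[n^{k+2}]$.

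The main obstacle I anticipate is absorbing the polynomial-time postprocessing from $a^*(x)$ to $M(x,a^*(x))$ without paying an external $n^c$ cost for the runtime $c$ of $M$, which need not satisfy $c \le k+2$. The mode-flag $f$ is my remedy: it routes the entire simulation of $M$ through the NP verifier of $\widetilde L$, so the hypothesis — which depends only on the instance length of $\widetilde L$, not on its verifier runtime — absorbs this step at the cost of $O(n^k)$ per query rather than $n^c$ outside all oracle calls. A secondary concern is that the i.o.\ hypothesis supplies a separate infinite set of good lengths per NP language; this is why I consolidate every query into the single predicate $\widetilde L$, securing one common infinite family $S$ that governs all $n+1$ queries simultaneously.
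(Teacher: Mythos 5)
Your proof is correct and follows essentially the same approach as the paper's: both introduce a single auxiliary $\mathsf{NP}$ language that in one mode checks the ``consistency'' of a prefix of oracle-answer bits (your partial-consistency/extendability condition is the paper's condition that every claimed positive answer be witnessed) and in a second mode decides acceptance of $M$ under a full answer string, then recover the true answer string bit-by-bit via $n$ adaptive calls to a small circuit for that language on a good input length, using padding to align good lengths with $n$. Your lex-max framing and variable-length prefix are a mild repackaging of the paper's fixed-length tuples with a pointer $j$, and the size accounting yields the same $O(n^{k+2})$ bound (in fact $O(n^{k+1})$).
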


\begin{proof} Let $L$ be a language in $\mathsf{P}^{\mathsf{NP}[n]}$ decided by a deterministic polynomial-time oracle machine $M$ running in time at most $q(n)$. For convenience, we assume without loss of generality that $M$ makes \emph{exactly} $n$ queries before accepting or rejecting an input string, regardless of the answers provided by its $\mathsf{NP}$ oracle $O$.

We consider the language $L_\mathsf{aux}$ containing all tuples $(a,j,b_1, \ldots, b_n, 1^{(t)}, c)$, where $|a| = n$, $1 \leq j \leq n$, each $b_i \in \{0,1\}$, $t \in \mathbb{N}$ is a padding parameter, and $c \in \{0,1\}$ is a control bit, which satisfy the following conditions:
\begin{itemize}
\item If $c = 0$, then when $M$ computes on $a$ and its first $j$ queries are answered according to $b_1, \ldots, b_j$, for each $i \leq j$ if $y_i \in \{0,1\}^\star$ is the $i$-th query and $b_i = 1$ we have $y_i \in O$.
\item If $c = 1$, the machine $M$ accepts $a$ within $q(n)$ steps under oracle answers $b_i$ for $1 \leq i \leq n$.
\end{itemize}
Since $O \in \mathsf{NP}$ and $M$ is a deterministic polynomial time machine, $L_\mathsf{aux} \in \mathsf{NP}$. Using the hypothesis of the lemma, for infinitely many values of $n$ there exists $t \leq 10n$ and a circuit $D_n$ for $L_\mathsf{aux}$ of size at most $C(n + \log n + n + t + 1)^k \leq n^{k + 1}$ (for large enough $n$) that decides $L_\mathsf{aux}$ with respect to our parameter $n$ (the input length for an instance of $L$). Note that the parameter $t$ allows us to hit the ``good'' input lengths without technical considerations about the input encoding employed in the definition of $L_\mathsf{aux}$.

We will use $D_n$ (with the correct value $t$ non-uniformly hardcoded in the input) as a sub-routine in order to solve $L$ on inputs of length $n$, as described next. First, we recover the correct oracle answers $d_1, \ldots, d_n$ for a given input string $a$. This is done in $n$ steps, where the $i$-th step recovers $d_i$. To recover $d_1$, we use $D_n$ to compute
\[
D_n(a,\stackrel{j}{1},\overbrace{1, \star, \ldots, \star}^{\vec{b}}, 1^{(t)}, \stackrel{c}{0}),
\]
where each $\star$ can be replaced by an arbitrary bit. If the output is $1$, the first query made by $M$ on $a$ has a positive answer with respect to $O$ (since positive queries must be strings in $O$ by the definition of tuples in $L_\mathsf{aux}$ when $c = 0$). Otherwise, we must have $d_1 = 0$. Next, we invoke
\[
D_n(a,\stackrel{j}{2},\overbrace{d_1, 1, \star, \ldots, \star}^{\vec{b}}, 1^{(t)}, \stackrel{c}{0}),
\]
knowing that the answer to the first query is correct. By the same argument, we are able to recover $d_2$, and proceeding similarly, we can recover all correct answers $d_1, \ldots, d_n$. Finally, by invoking $D_n(a,n,d_1, \ldots, d_n,1^{(t)},1)$ with $c=1$ and using the correct oracle answers, we can decide if $a \in L$. Clearly, this entire computation can be performed by a circuit of size at most $O(n \cdot |D_n|) = O(n^{k + 2})$, which completes the proof.\end{proof}

It follows from Lemma \ref{l:PNP_to_NP} and the properties of $L_\mathsf{hard}$ that there is a language $L \in \mathsf{NP}$ such that $L \notin \textit{i.o.}\mathsf{SIZE}[n^k]$. Consequently, if $\varphi_L(x)$ is a formula that defines $L$ then $\mathsf{S}^1_2 \cup \mathsf{True}_0 \nvdash \upci(\varphi_L)$. This completes the proof of item (\emph{b}).\\

\noindent (\emph{c}) We follow the overall strategy of the proof of \cite[Theorem 2.1]{unprov} (which combines the proof of \citep[Theorem 1.1]{SW} with other ideas), but the infinitely often statement considered here introduces certain difficulties. In particular, it is not clear how to adapt the proof in \citep{SW} to show that $\mathsf{P}$ is not contained infinitely often in $\mathsf{P}$-uniform $\mathsf{SIZE}[n^k]$. In general, combining different computations that succeed infinitely often might not produce a computation that succeeds infinitely often. We explain below how the argument from \citep{unprov} can be modified to establish the stronger statement in part (\emph{c}). (For simplicity of notation, we restrict our discussion to $\mathsf{PV}$, but the argument works for $\mathsf{PV} \cup \mathsf{True}_0$ as well.)

Let $g_{k'}$ for $k' = 3k$ be the $\mathsf{PV}$ function symbol provided by \citep[Lemma 3.1]{unprov}. Recall that $\mathsf{PV}$ proves that any uniform algorithm $h$ running in time at most $n^{k'-1}$ will fail to compute $g_{k'}$, even if $h$ is given a certain amount of advice that can depend on the input length. If $\mathsf{PV} \nvdash \upci(g_{k'})$ we are done. Otherwise, applying the KPT Theorem (see e.g.~\citep[Theorem 4.1]{unprov}) to sentence $\upci(g_{k'})$ (note crucially that $\upci(g_{k'})$ has the right quantifier complexity), we obtain a fixed $r \in \mathbb{N}$ (independent of $n$) and $\mathsf{PV}$ function symbols $f_1, \ldots, f_r$ such that on input  $1^{(n)}$ each function $f_i$ outputs $n \leq n_i \leq n^{a_i}$ (represented as $1^{(n_i)}$) and a circuit $C^{i}_{n_i}$ of size at most $n_i^{k}$ that is a candidate circuit for $g_{k'}$ on inputs of length $n_i$ (the upper bound $n^{a_i}$ is provable in $\mathsf{PV}$). As usual in applications of the KPT Theorem, each function $f_i$ in addition to $1^{(n)}$ might also depend on potential counter-examples to the correctness of the pairs $(n_j,C^{j}_{n_j})$ for $j < i$. In other words, from the provability of $\upci(g_{k'})$ theory $\mathsf{PV}$ proves the universal closure of the following disjunction:\footnote{For simplicity of notation, we left out in each row of (\ref{eq:disjunction}) the condition
$n_i \geq n$, for $i= 1,2, \ldots, r$.}

\begin{multline}\label{eq:disjunction}
\begin{aligned}
&\big[f_1(\un) = (1^{(n_1)}, C_{n_1}^1) \wedge |C^1_{n_1}| \leq n_1^k \wedge (|x_1|=n_1\rightarrow C^1_{n_1}(x_1) = g_{k'}(x_1))\big]\\
\vee &\big[f_2(\un, x_1) = (1^{(n_2)}, C_{n_2}^2) \wedge |C^2_{n_2}| \leq n_2^k \wedge (|x_2|=n_2\rightarrow C^2_{n_2}(x_2) = g_{k'}(x_2))\big] \vee \ldots \\
\vee&\big[f_r(\un, x_1, \ldots, x_{r-1}) = (1^{(n_r)}, C_{n_r}^r) \wedge |C^r_{n_r}| \leq n_r^k \wedge (|x_r|=n_r\rightarrow C^r_{n_r}(x_r) = g_{k'}(x_r))\big].
\end{aligned}
\end{multline}

Modifying the strategy of \citep{unprov}, we argue that either $\mathsf{PV} \nvdash \upci(\widetilde{f_1})$ for a certain  $\mathsf{PV}$ function symbol $\widetilde{f_1}$ that depends on $f_1$ (we are done in this case), or $\mathsf{PV}$ proves that the circuit $C^1_{n_1}$ output by $f_1(1^{(n)})$ does not succeed in computing $g_{k'}$ on inputs of length $n_1 = n_1(1^{(n)})$ for \emph{infinitely many values of} $n$. It will be important that such values of $n$ are polynomially gapped, and that an infinite set $S_1$ of strings of the form $1^{(n)}$ corresponding to them can be enumerated by a $\mathsf{PV}$ function symbol $u_1(1^{(\ell)})$. This allows us to eliminate one disjunct in the sentence obtained from the KPT Theorem if we quantify not over $1^{(n)}$ for all $n$ but just over strings $1^{(n)}$ in the image of $u_1(1^{(\ell)})$,\footnote{As opposed to \citep{unprov}, which focuses on larger input lengths after each iteration of the argument.} since on these specific $1^{(n)}$ the function $f_1$ never succeeds in generating a circuit that correctly computes $g_{k'}$ on input length $n_1 = n_1(1^{(n)})$, and in addition (as we explain below) there is a $\mathsf{PV}$ function symbol that provably produces counter-examples. The proof of our result can be completed by iterating the argument $r$ times while focusing on the relevant input lengths. The idea is similar in spirit to \citep{unprov}, but the argument is more involved because intuitively we need to consider a chain $S_r \subseteq \ldots \subseteq S_1$ of infinite sets of input parameters: If $j \leq i$ then $\mathsf{PV}$ proves that function $f_j$ from the KPT disjunction (with appropriate counter-examples) does not succeed on $1^{(n)} \in S_i$ (assuming the provability of certain auxiliary sentences $\upci(\widetilde{f_j})$). We provide the details next.

Recall that in the terminology of \citep{unprov} the function symbol $\widetilde{f}$ decides $L_\mathsf{succ}$, a padded version of the language $L_{\mathsf{dc}}$ encoding the direct connection language of the circuits generated by $f$. Our definition of $\widetilde{f_1}$ is analogous to the construction in \citep{unprov}, but we need to change the amount of padding in order to accommodate the new setting. Here $f_1(1^{(n)})$ might generate candidate circuits for $g_{k'}$ on larger input lengths. Moreover, the circuits for $\widetilde{f_1}$ obtained from the provability of $\upci(\widetilde{f_1})$ are only guaranteed to work infinitely often. Handling these complications in the case of $f_1$ (and in subsequent cases) will be possible because $g_{k'}$ is hard on \emph{every large enough input length} and the relevant input lengths ($n_1 = n_1(1^{(n)}) \leq n^{a_1}$ in the case of $f_1$) are \emph{provably computable in polynomial time}.

In more detail, let $L^1_\mathsf{dc}$ encode the direct connection language of the sequence of circuits $C^1_{n_1}$ on $n_1 \leq n^{a_1}$ input bits produced by $f_1(1^{(n)})$. Similarly to \citep{unprov}, our language $L^1_\mathsf{succ}$ will be a succinct version of $L^1_\mathsf{dc}$. This time we compress the tuples encoding $C^1_{n_1}$ to:
\[
\langle \mathsf{Bin}(n), 1^{(n^{1/10k})}, u,v,w, 1^{t} \rangle,
\]
where crucially $t$ is arbitrary. (The parameter $t$ is needed in connection to an infinitely often circuit upper bound for $L^1_\mathsf{succ}$, since it makes this language paddable. The use of $t$ here is different than in \citep{unprov}, where it appears only for convenience and as a function of other input parameters.) Under our assumptions, a $p$-time algorithm $\widetilde{f_1}$ deciding $L^1_\mathsf{succ}$ can be defined in $\mathsf{PV}$. Suppose that $\mathsf{PV}  \vdash \upci(\widetilde{f_1})$. Recall that, assuming $C^1_{n_1}$ is a correct circuit for $g_{k'}$, a small circuit for $\widetilde{f_1}$ allows one to obtain a short advice string representing a circuit that decides the tuples of $C^1_{n_1}$, which in turn allows us to compute $g_{k'}$ in time $\ll n_1^{k' - 1}$. Arguing in $\mathsf{PV}$ and adapting the proof of \citep[Lemma 3.2]{unprov} in the natural way (i.e.~by padding $t$ appropriately and using the almost-everywhere hardness of $g_{k'}$), it follows that for infinitely many choices of $1^{(n)}$, $C^1_{n_1}$ does not compute $g_{k'}$ on inputs of length $n_1 = n_1(1^{{(n)}})$. Equivalently,
\[
\mathsf{PV} \vdash \forall 1^{(\ell)}\, \exists 1^{(n)} (n \geq \ell)\, \exists x (|x| = n_1 (1^{(n)})),\; g_{k'}(x) \neq C^1_{|x|}(x)\; .
\]
Using Herbrand's Theorem and in analogy to \citep[Lemma 3.2]{unprov}, there are $\mathsf{PV}$ function symbols $u_1$ and $e_1$ witnessing these existential quantifiers. Furthermore, provably in $\mathsf{PV}$ we have $|u_1(1^{(\ell)})| \leq \ell^{c_1}$ for some constant $c_1$. Therefore, we can take $S_1$ as the infinite set of strings $1^{(n)}$ obtained from $u_1(1^{(\ell)})$ over all choices of $\ell$, and $e_1(1^{(\ell)})$ witnesses that the corresponding circuits $C^1_{n_1}$ are incorrect over the associated input lengths $n_1 = n_1(1^{(n)})$.

The formula obtained from our initial application of the KPT Theorem to $\upci(g_{k'})$ can now be simplified in $\mathsf{PV}$ to a formula equivalent to:
\[
\forall 1^{(n)} \in S_1,\; \text{``KPT disjunct for}~j \in [2,r]~\text{under the counter-example}~x_1 \eqdef e_1(1^{(\ell)})\text{''},
\]
where the quantifier $\forall 1^{(n)} \in S_1$ is  expressed in $\mathsf{PV}$ by ``$\forall 1^{(\ell)} \,\forall 1^{(n)}$ such that $1^{(n)} = u_1(1^{(\ell)})$''. A bit more precisely, the second and later disjuncts in the KPT expression (\ref{eq:disjunction})
     contain functions $f_i$ for $i > 1$ depending on $1^{(n)}$
     and on each $x_j$ for which $j < i$, where the $x_j$ are the variables for counter-examples
     to the correctness of circuits $C^j_{|x_j|}$.
       Now substitute everywhere
          $1^{(n)} = u_1(1^{(\ell)})$  and $x_1 = e_1(1^{(\ell)})$.
     By the choice of $u_1$ and $e_1$,  this substitution provably
     falsifies the first disjunct and also
          $n_1(1^{(n)}) \geq n \geq \ell$.
     Hence (\ref{eq:disjunction}) is turned into a KPT expression with $r-1$ disjuncts, i.e.:
\begin{multline} \label{eq:disjunction2}
\begin{aligned}
&\big [f_2(u_1(1^{(\ell)}), e_1(1^{(\ell)})) = (1^{(n_2)}, C_{n_2}^2) \wedge |C^2_{n_2}| \leq n_2^k
\wedge (|x_2|=n_2\rightarrow C^2_{n_2}(x_2) = g_{k'}(x_2))\big]\\
\vee&\ldots\\
\vee{}&\left[
  \begin{split}
    f_r(u_1(1^{(\ell)}), e_1(1^{(\ell)}),& x_2, \ldots, x_{r-1}) = (1^{(n_r)}, C_{n_r}^r) \wedge |C^r_{n_r}| \leq n_r^k\\
    &\wedge{} (|x_r|=n_r\rightarrow C^r_{n_r}(x_r) = g_{k'}(x_r))
  \end{split}
\right],
\end{aligned}
\end{multline}
where for convenience of notation we have omitted the conditions $n_i \geq \ell$, for $i = 2, \ldots, r$. One can also replace $f_2(u_1(1^{(\ell)}), e_1(1^{(\ell)}))$ by an equivalent term in the language of $\mathsf{PV}$, say, $f'_1(1^{(\ell)})$, and similarly for each $f_i$ appearing in the expression above. We have therefore eliminated one disjunct from the formula appearing in Equation (\ref{eq:disjunction}).

The result is proved as in \citep{unprov} by iterating this argument in the natural way until some derived sentence $\upci(\widetilde{f_i})$ is unprovable or one eliminates all disjuncts. The latter case leads to a contradiction. (Intuitively, the sets $S_i$ contain infinitely many elements and on every string $1^{(n)}$ one of the functions obtained from the initial KPT disjunction must succeed when given appropriate counter-examples. Eliminating all disjuncts contradicts the formula obtained from KPT witnessing, or more precisely, one of the subsequent formulas derived from it in the argument presented above.)

For instance, in the case we get $r = 2$ after the application of the KPT Theorem, assuming that $\mathsf{PV}$ also proves $\upci(\widetilde{f'_1})$, and arguing identically as before, we now get functions $u'_1$ and $e'_1$
computing witnesses (lengths and inputs) that circuits provided by $f'_1$ fail infinitely often to compute $g_{k'}$.\footnote{Complementing our initial informal discussion, while the set $S_1$ is the range of $u_1$, the set $S_2 \subseteq S_1$ is the range of the composed map $u_1 \circ u'_1$.} This is contradictory, because this time the formula from Equation (\ref{eq:disjunction2}) claims that $f'_1$ must succeed (recall that $f'_1(1^{(\ell)}) = f_2(u_1(1^{(\ell)}), e_1(1^{(\ell)}))$) as there are no more disjuncts if $r = 2$.

This completes the proof of Theorem \ref{t:main} item (\emph{c}).\\

\begin{rem}
  We note that in Theorem \ref{t:main} it is possible to \emph{syntactically} enforce the language $L$ to be in the class $\mathcal{C}$ from the formula $\varphi(x)$ and theory $T$. In general, this follows from the definability of these languages in the corresponding theories by formulas of appropriate complexity (see e.g.~\citep[Section 2.6]{buss1997bounded}). In more detail, for part (\emph{a}), as we observed in the concluding remarks of Section \ref{s:intro}, the consistency result extends to theory $\mathsf{S}^2_2(\mathsf{PV})$. In this case, a language in $\mathsf{P}^{\mathsf{NP}}$ is definable in the theory via two provably equivalent $\Sigma^b_2$ and $\Pi^b_2$ formulas. (The provability of the equivalence needs to be done in $\mathsf{S}^2_2(\mathsf{PV})$.) For part (\emph{b}), the corresponding language $L$ is definable in $\mathsf{S}^1_2(\mathsf{PV})$ by a $\Sigma^b_1(\mathsf{PV})$ formula. Lastly, for part (\emph{c}) the proof presented above already implies the claim, since the language is given by a $\mathsf{PV}$ function symbol. Note that in parts (\emph{b}) and (\emph{c}) provability in the theory is not necessary: the syntactic form of the formula (i.e.~$\Sigma^b_1(\mathsf{PV})$ and atomic $\mathsf{PV}$
formula, respectively) imply that the language is in the corresponding class.
\end{rem}

\section*{Acknowledgements}

We would like to thank J\'{a}n Pich, Rahul Santhanam, and Moritz M\"{u}ller for several related discussions. We are also grateful to the reviewers for comments that improved our presentation.

This work  was supported in part by the European Research Council under the European Union's Seventh Framework Programme (FP7/2007-2014)/ERC Grant Agreement no.~615075 and by a Royal Society University Research Fellowship. Jan Byd\v{z}ovsk\'{y} is currently partially supported by the Austrian Science Fund (FWF) under Project P31955.

\bibliographystyle{alpha}
\bibliography{refs-consistency}

\appendix

\section{Consistency of $\mathsf{P} \neq \mathsf{NP}$ from unprovability of lower bounds}\label{s:appendix_unprov}

Imagine that against most expectations $\mathsf{P}$ is actually equal to $\mathsf{NP}$ and there is a polynomial time algorithm $f$ (i.e.~a $\mathsf{PV}$ function symbol)
that finds a satisfying assignment for all satisfiable formulas. In other words, if $\psi_\mathsf{SAT}(x,y)$ denotes an $L(\mathsf{PV})$-formula
that checks if $y$ satisfies the formula encoded by $x$, then the sentence
\begin{equation} \label{31.1.19a}
\pnp(f)  \;\eqdef\; \forall x\, \forall y\,[\psi_\mathsf{SAT}(x,y) \rightarrow \psi_\mathsf{SAT}(x, f(x))]
\end{equation}
is true in the standard model.
Now suppose that in order to prove the universal statement $\pnp(f)$ in Equation (\ref{31.1.19a}) you have to use concepts (definitions, predicates, etc.) that cannot be defined
as polynomial-time algorithms. To be more specific, assume that (\ref{31.1.19a}) is provable using induction for non-deterministic
polynomial-time algorithms (corresponding to theory $\mathsf{T}^1_2(\mathsf{PV})$), but not using induction for polynomial-time algorithms only (corresponding to theory $\mathsf{PV}$). Could we still maintain that the mere existence of $f$ implies that the satisfiability problem is ``feasible''?

This question is more philosophical than mathematical, and we are not going to offer an answer. Instead, we
suggest to consider a strictly mathematical question.

\begin{conj} \label{31.1.19b}
For no polynomial-time algorithm $f$  theory $\mathsf{PV}$ proves the sentence $\pnp(f)$.
\end{conj}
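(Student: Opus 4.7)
The plan is to assume $\mathsf{PV} \vdash \pnp(f)$ for some polynomial-time algorithm $f$ with running-time exponent $c_f$ and attempt to derive a contradiction with Theorem~\ref{t:main}. Although $\pnp(f)$ is purely universal and the KPT theorem therefore does not apply directly to extract witnesses, its provability does allow $\mathsf{PV}$ to eliminate existential quantifiers in any $\Sigma^b_1(\mathsf{PV})$-formula by invoking $f$ on the appropriate reduction to SAT. Iterating this elimination, $\mathsf{PV} + \pnp(f)$ would provably simulate any fixed level of the polynomial hierarchy by an $L(\mathsf{PV})$-term.

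Concretely, I would proceed as follows. Fix an arbitrary $k\ge 1$ and let $\varphi_k$ be the formula supplied by Theorem~\ref{t:main}(a), defining a Kannan-style language $L_k \in \mathsf{P}^\mathsf{NP}$ with $\mathsf{T}^1_2(\mathsf{PV})\cup\tr_1 \nvdash \upci(\varphi_k)$. Replace each $\mathsf{NP}$-oracle call in the defining $\mathsf{P}^\mathsf{NP}$-algorithm for $L_k$ by an invocation of $f$ on the associated SAT instance, obtaining an $L(\mathsf{PV})$-term $g_k$. Using $\mathsf{PV}\vdash\pnp(f)$, formalize in $\mathsf{PV}$ the correctness of the substitution to derive $\mathsf{PV} \vdash \forall x\,(\varphi_k(x) \leftrightarrow g_k(x) = 1)$. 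Combined with the provable polynomial bound on the circuit size of $g_k$, this yields $\mathsf{PV} \vdash \upci(\varphi_k)$ for an explicit exponent $e(k, c_f)$, which would contradict Theorem~\ref{t:main}(a) via the inclusion $\mathsf{PV} \subseteq \mathsf{T}^1_2(\mathsf{PV}) \cup \tr_1$, provided that $k$ can be chosen so that $e(k, c_f) \le k$.

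The main obstacle is exactly the final condition $e(k, c_f) \le k$. In every Kannan-style construction known to us, the $\mathsf{P}^\mathsf{NP}$-algorithm defining $L_k$ has to manipulate descriptions of $n^k$-size circuits, so the derived time bound satisfies $e(k, c_f) \ge k$; the very $k$-dependence that makes the Kannan diagonalization work also prevents the desired comparison. Circumventing this appears to require an almost-everywhere $\mathsf{P}^\mathsf{NP}$-diagonalization whose uniform time complexity is independent of the target circuit-size exponent, which is essentially equivalent to separating $\mathsf{P}^\mathsf{NP}$ from $\mathsf{SIZE}[\mathsf{poly}]$---an open problem. A complementary route would be a direct model-theoretic construction: given $f$, produce (via forcing or ultrapowers, in the spirit of the proof of Theorem~\ref{t:main}(c)) a model of $\mathsf{PV}$ containing a nonstandard satisfiable formula on which $f$ fails. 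Since under $\mathsf{P}=\mathsf{NP}$ some $f$ may actually compute SAT in the standard model, any such counterexample must be genuinely nonstandard, and constructing one uniformly in $f$ seems to require techniques beyond those developed in the present paper.
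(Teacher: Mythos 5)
This statement is a \emph{conjecture}, not a theorem: the paper explicitly does not prove it. Its only support in the paper is that it follows from any one of three open hypotheses ($\mathsf{P}\neq\mathsf{NP}$, non-collapse of bounded arithmetic to $\mathsf{PV}$, or superpolynomial lower bounds for Extended Frege), and the appendix adds a fourth sufficient condition via Theorem~\ref{t:thm_appendix}: if for some $k$ no $\mathsf{PV}$ function symbol $h$ has $\mathsf{PV}\vdash\LB(h)$, then Conjecture~\ref{31.1.19b} holds. None of these hypotheses is established in the paper. So there is no ``paper's own proof'' to compare against; the correct reference point is that the conjecture is open.

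Given that, your diagnosis of where your own attempt breaks is accurate, and it is worth being explicit about why no route through Theorem~\ref{t:main} can succeed. The proof of Theorem~\ref{t:main}(a) is a case split on whether $\mathsf{PH}=\mathsf{P}^\mathsf{NP}$. If $\mathsf{PV}\vdash\pnp(f)$, then by soundness $\mathsf{P}=\mathsf{NP}$, so the first (Kannan) case is the one that applies and the hard formula $\varphi_k$ defines a language $L_k\notin\textit{i.o.}\mathsf{SIZE}[n^k]$ by construction. Your plan is to convert the $\mathsf{P}^\mathsf{NP}$-algorithm for $L_k$ into an $\lpv$-term $g_k$ via $f$ and conclude $\mathsf{PV}\vdash\upci(\varphi_k)$; but $\upci(\varphi_k)$ is \emph{false} in the standard model (that is exactly how $\varphi_k$ was chosen), so a sound theory cannot prove it, and the circuit you get for $g_k$ has size $n^{\Omega(k\cdot c_f)}$, consistent with this. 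In other words, the exponent blow-up you flagged is not a gap you might be able to close with more care; it reflects the fact that $\mathsf{P}=\mathsf{NP}$ (even $\mathsf{PV}$-provably) is perfectly compatible with every assertion in Theorem~\ref{t:main}, so no contradiction is available from that direction. Your closing observation that overcoming this is tantamount to separating $\mathsf{P}^\mathsf{NP}$ from polynomial-size circuits is essentially right, and the model-theoretic alternative you sketch is precisely the kind of technique the paper does not develop. The constructive thing you could have extracted from the paper itself is the reduction in Theorem~\ref{t:thm_appendix}: the conjecture would follow if one could show $\mathsf{PV}\nvdash\LB(h)$ for all $h$ and some fixed $k$, which reframes the problem as an \emph{unprovability-of-lower-bounds} question rather than a diagonalization; that too remains open, but it is the concrete handle the paper offers.
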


Informally, Conjecture \ref{31.1.19b} states that $\mathsf{PV}$ and by standard conservation results  $\mathsf{S}^1_2$ are both consistent with $\mathsf{P} \neq \mathsf{NP}$.
That is, either $\mathsf{P} \neq \mathsf{NP}$ as often assumed, and hence the conjecture is trivially true,
or $\mathsf{P} = \mathsf{NP}$ but you cannot prove it using only polynomial-time concepts and reasoning. For this reason, Conjecture \ref{31.1.19b}
is a formal weakening of the conjecture that $\mathsf{P} \neq \mathsf{NP}$.

We do not claim any originality for the conjecture; not only it follows from $\mathsf{P} \neq \mathsf{NP}$ but the statement is also known to follow from the conjectures that bounded arithmetic does not collapse to $\mathsf{PV}$
or that the Extended Frege propositional proof system is not polynomially bounded. The conjecture must have been also
one of the ideas leading Stephen Cook to his seminal paper \cite{Coo75}. We think it is a weakening of the
$\mathsf{P}$ vs.~$\mathsf{NP}$ conjecture that has an intrinsic relevance to it, and that it ought to be studied more
(cf.~\cite{CooKra} for more discussion).\\

In this appendix, we observe that Conjecture \ref{31.1.19b} is related to the \emph{unprovability of circuit lower bounds}. For a $\mathsf{PV}$ function symbol $h$ and a circuit size parameter $k \in \mathbb{N}$, consider the sentence
\begin{equation}\label{eq:LB}
\LB(h) \;\eqdef\; \neg \upci(h)\; ,
\end{equation}
where $\upci(h)$ is the sentence from Equation (\ref{eq:defformula}). Intuitively, $\LB(h)$ states that the language defined by $h$ is hard on input length $m$ for circuits of size $m^k$ whenever $m \geq n$, for a fixed value $n$.

\begin{thm}[Consistency of lower bounds with $\mathsf{PV}$ from the unprovability \nolinebreak of \nolinebreak lower \nolinebreak bounds] \label{t:thm_appendix}
If there exists $k \in \mathbb{N}$ such that for no function symbol $h$  theory $\mathsf{PV}$ proves the sentence $\LB(h)$, then Conjecture \ref{31.1.19b} holds.
\end{thm}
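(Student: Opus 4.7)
The plan is to prove the contrapositive: assume there is a $\mathsf{PV}$ function symbol $f$ with $\mathsf{PV} \vdash \pnp(f)$, and show that, for every $k \in \mathbb{N}$, there is a $\mathsf{PV}$ function symbol $h$ with $\mathsf{PV} \vdash \LB(h)$; this will contradict the hypothesis of the theorem.

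The point of the assumption is that $\mathsf{PV}$ internally proves $f$ decides SAT, so $\mathsf{PV}$ can simulate any $\mathsf{P}^{\mathsf{NP}}$ (and, iterating, any $\mathsf{PH}$) computation by polynomial-time code invoking $f$ as a subroutine. My plan is to formalize Kannan's diagonalization inside $\mathsf{PV}$ using this power. Fix $k$ and define a $\mathsf{PV}$ function symbol $h = h_k$ by: on input $x$ with $|x| = m$, use $f$ to search for the lex-first Boolean circuit $C^*_m$ of size at most $m^{k+1}$ whose Boolean function on $\{0,1\}^m$ is not computed by any circuit of size at most $m^k$, and then output $C^*_m(x)$; if the search fails on some length $m$, default to $0$ there. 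Since $C^*_m$ has a polynomial-size description in $m$ and the predicate ``$C$ is functionally inequivalent to every circuit of size $\leq m^k$'' is $\Pi_1^p$ in $C$, the whole procedure is in $\mathsf{P}^{\mathsf{NP}}$, hence polynomial time under $\pnp(f)$, giving a genuine $\mathsf{PV}$ function symbol.

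Next I would verify $\mathsf{PV} \vdash \LB(h_k)$. Arguing inside $\mathsf{PV}$: for all $m$ above a suitable threshold $m_0$, (i) a hard circuit $C^*_m$ exists, by the standard Kannan counting argument comparing the number of $m$-variable Boolean functions with the number of circuits of size $\leq m^k$; (ii) by $\pnp(f)$ and correctness of the search, the procedure actually returns this $C^*_m$; and (iii) by the defining property of $C^*_m$, no circuit of size $\leq m^k$ agrees with $h_k$ on every input of length $m$. Consequently $h_k$ restricted to strings of length $m$ is not in $\mathsf{SIZE}[m^k]$ for every $m \geq m_0$, which is exactly $\LB(h_k)$; taking $n := m_0$ witnesses the outermost existential quantifier in $\neg \upci(h_k)$.

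The chief obstacle is step (i): formalizing Kannan's counting inside pure $\mathsf{PV}$. The relevant cardinalities are super-polynomial in $m$, and $\mathsf{PV}$ alone lacks the dual weak pigeonhole principles typically used for counting at this scale. I would address this either by replacing the $\Sigma_2^p$ existence claim with a direct self-referential construction of a hard circuit whose hardness is $\mathsf{PV}$-verifiable (using $f$ to exhibit, for each size-$m^k$ circuit, an explicit input on which it disagrees with $C^*_m$), or by observing that $\pnp(f)$ already lets $\mathsf{PV}$ prove the specific weak pigeonhole instances we need through explicit $\mathsf{NP}$ witness searches with $f$. Either route turns (i) into a universal statement checkable in $\mathsf{PV}$, completing the argument and delivering the contradiction.
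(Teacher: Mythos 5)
Your overall plan---go by contrapositive, formalize Kannan's diagonalization inside $\mathsf{PV}$, and use $\pnp(f)$ to turn $\mathsf{NP}$-oracle searches into polynomial-time computations---is the same as the paper's, and you correctly identify the decisive obstacle: $\mathsf{PV}$ cannot by itself carry out the counting argument showing that a hard function (or hard circuit) of the right size exists. That is exactly where the paper's proof does its real work, and it is the step your proposal leaves unresolved.

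Neither of your two proposed workarounds actually plugs the gap. The ``direct self-referential construction'' does not address existence: once you have $C^*_m$, checking that it disagrees with every size-$m^k$ circuit is indeed an $\mathsf{NP}$-with-$f$ search, but you must first prove inside $\mathsf{PV}$ that such a $C^*_m$ exists, which is the counting claim. The suggestion that ``$\pnp(f)$ already lets $\mathsf{PV}$ prove the specific weak pigeonhole instances we need through explicit $\mathsf{NP}$ witness searches'' is also not right as stated: the surjective weak pigeonhole principle $\mathsf{sWPHP}(g)$ asserts the existence of an element outside the image of a $p$-time stretching map, and certifying that a candidate $y$ is genuinely missing is a co-$\mathsf{NP}$/$\Pi^b_1$ claim quantified over all preimages; you cannot simply hand it to an $\mathsf{NP}$ witness search, and any proof that a bounded search for such a $y$ terminates presupposes the very pigeonhole principle you are trying to establish. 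The ingredient you are missing is the KPT collapse theorem: if $\mathsf{PV} \vdash \mathsf{P} = \mathsf{NP}$ (i.e.\ $\mathsf{PV} \vdash \pnp(f)$), then the bounded-arithmetic hierarchy $\mathsf{T}_2(\mathsf{PV})$ collapses to $\mathsf{PV}$. Since $\mathsf{T}^2_2(\mathsf{PV}) \vdash \mathsf{sWPHP}(\mathsf{PV})$, this collapse imports $\mathsf{sWPHP}$ into $\mathsf{PV}$, and from $\mathsf{sWPHP}(g)$ (with $g$ mapping a size-$n^k$ circuit to the first $n^{k+1}$ bits of its truth table) one gets $\mathsf{PV}$-provably a hard truth-table prefix. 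This is the lemma that makes your step (i) go through; without it, the argument does not close.
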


Note that the hypothesis of Theorem \ref{t:thm_appendix} is weaker than the assumption that $\mathsf{PV}$ does not prove that $\mathsf{NP} \nsubseteq \mathsf{SIZE}[n^k]$ for some $k$. Roughly speaking, Theorem \ref{t:thm_appendix} shows that if $\mathsf{PV}$ does not prove circuit lower bounds then $\mathsf{P} \neq \mathsf{NP}$ is consistent with $\mathsf{PV}$.

\begin{proof}[Sketch of the proof of Theorem \emph{\ref{t:thm_appendix}}] The argument proceeds in the contrapositive.
We formalize in $\mathsf{PV}$ the result that if $\mathsf{P} = \mathsf{NP}$ then for each parameter $k$, $\mathsf{P} \nsubseteq \textit{i.o.}\mathsf{SIZE}[n^k]$ (see e.g.~\citep[Theorem 3]{DBLP:conf/coco/Lipton94}). Recall that this is obtained by combining the collapse of $\mathsf{PH}$ to $\mathsf{P}$ together with Kannan's argument \citep{Kan} showing that $\mathsf{PH}$ can define languages that are almost-everywhere hard against circuits of fixed-polynomial size. The usual proof of this claim shows via a counting argument the existence of a truth-table of size $2^n$ that is hard against circuit size $n^k$. A potential issue is that this result might not be available in $\mathsf{PV}$.

We overcome this difficulty as follows. From the provability in $\mathsf{PV}$ that $\mathsf{P} = \mathsf{NP}$, it follows that the hierarchy $\mathsf{T_2}(\mathsf{PV})$ of bounded arithmetic theories $\mathsf{T}^i_2(\mathsf{PV})$ collapses to $\mathsf{PV}$ \citep{KPT}. Recall that the surjective weak pigeonhole principle $\mathsf{sWPHP}$ for $\mathsf{PV}$ function symbols is provable in $\mathsf{T}^2_2(\mathsf{PV})$ (see e.g.~\citep{kniha}). Define a $\mathsf{PV}$ function symbol $g$ that takes as input a circuit $C$ of size $n^k$ and outputs the first $n^{k + 1}$ bits of the truth-table computed by $C$. From $\mathsf{sWPHP}(g)$ we now derive in $\mathsf{PV}$ that the prefix of some truth-table is not computable by circuits of size $n^k$, if $n$ is sufficiently large. We can (implicitly) extend the lexicographic first truth-table prefix satisfying this property with zeroes, and use the resulting truth-table to define a $\mathsf{PV}$-formula $\varphi(x)$ with a constant number of bounded quantifiers that defines a language $L$ that is hard against circuits of size $n^k$, where the hardness is provable in $\mathsf{PV}$. Since the provability in $\mathsf{PV}$ that $\mathsf{P} = \mathsf{NP}$ implies the provability in $\mathsf{PV}$ that $\mathsf{PH}$ collapses to $\mathsf{P}$, it follows that $\varphi(x)$ is equivalent in $\mathsf{PV}$ to the language defined by some $\mathsf{PV}$ function symbol $h$. In other words, $\mathsf{PV} \vdash \LB(h)$, which completes the proof of Theorem \ref{t:thm_appendix}. \end{proof}

\end{document}